\newtheorem{theorem}{Theorem}
\newtheorem*{df}{Definition}
\newtheorem{lemma}{Lemma}
\newtheorem{example}{Example}
\newcommand{\sv}{\vect{s}}
\newcommand{\samp}[1][d,\tau]{\ensuremath{\sv_{#1}}\xspace}
\newcommand{\Tr}[1][\tau]{\ensuremath{\mathcal{T}_{#1}}\xspace}
\newcommand{\Qc}{\ensuremath{\mathcal{Q}}\xspace}
\newcommand{\Rt}[1][\tau]{\ensuremath{\mathcal{R}_{#1}}\xspace}
\newcommand{\R}{\ensuremath{\mathbb{R}}\xspace}
\newcommand{\ccurve}[1][d,\tau]{\ensuremath{\mathcal S_{#1}}\xspace}
\newcommand{\vect}[1]{\ensuremath{\mathbf{#1}}\xspace}
\newcommand{\pv}{\vect{p}}
\newcommand{\qv}{\vect{q}}
\newcommand{\lpr}[1]{\ensuremath{\underleftarrow{#1}}\xspace}
\newcommand{\rpr}[1]{\ensuremath{\underrightarrow{#1}}\xspace}
\begin{document}
%
% paper title
% Titles are generally capitalized except for words such as a, an, and, as,
% at, but, by, for, in, nor, of, on, or, the, to and up, which are usually
% not capitalized unless they are the first or last word of the title.
% Linebreaks \\ can be used within to get better formatting as desired.
% Do not put math or special symbols in the title.
\title{Period and Signal Reconstruction\\ from the Curve of Trains of Samples}
%PERIOD AND SIGNAL RECONSTRUCTION FROM THE CURVE OF TRAINS OF SAMPLES
%
%
% author names and IEEE memberships
% note positions of commas and nonbreaking spaces ( ~ ) LaTeX will not break
% a structure at a ~ so this keeps an author's name from being broken across
% two lines.
% use \thanks{} to gain access to the first footnote area
% a separate \thanks must be used for each paragraph as LaTeX2e's \thanks
% was not built to handle multiple paragraphs
%

\name{Marek W. Rupniewski}
\address{Institute of Electronic Systems\\
	Warsaw University of Technology,\\
	Nowowiejska 15/19, 00-665 Warsaw, Poland\\
  Email: Marek.Rupniewski@pw.edu.pl%
}%

% make the title area
\maketitle

% As a general rule, do not put math, special symbols or citations
% in the abstract or keywords.
\begin{abstract}
    A finite sequence of equidistant samples (a~sample train) of a periodic signal
    can be identified with a~point in a~multi-dimensional space. Such a~point
    depends on the sampled signal, the sampling period, and the starting time
    of the sequence. If the starting time varies, then the corresponding point
    moves along a closed curve. We prove that such a curve, i.e., the set of all
    sample trains of a given length, determines the period of the sampled
    signal, provided that the sampling period is known. This is true even if the
    trains are short, and if the samples comprising trains are taken at
    a~sub-Nyquist rate. The presented result is proved with a~help of the theory of
    rotation numbers developed by Poincaré.
    We also prove that the curve of sample trains determines the sampled signal up
    to a~time shift, provided that the ratio of the sampling period to the
    period of the signal is irrational. Eventually, we give an example which
    shows that the assumption on incommensurability of the periods cannot be dropped.
\end{abstract}

\begin{keywords}
signal sampling, period estimation, topological data analysis, dynamical system, rotation number, periodic time-series data, time delay embedding
\end{keywords}

\section{Introduction}
A short train of samples taken at sub-Nyquist frequency rarely allows for
determining the period of a continuous signal. Clearly, it is not
possible unless some sparsity conditions are imposed on the class of the
considered signals. Intuitively, any extra sample train should help in
finding a better estimate for the unknown period even if the time offsets
between the trains are not known. The aim of this paper is to augment the
intuition by showing
that the period of the~signal is uniquely determined by the set of all
sample trains of a~given length, provided that the sampling period is known.
For this, we embed the sample trains into a~multi-dimensional space:
\begin{equation}\label{e:seqdef}
   \samp[d,\tau](t) = [s(t),\, s(t+\tau),\, \dots,\, s(t+(d-1)\tau)]\in\R^d,
\end{equation}
where $s$ is the signal being sampled, $d$ is the length of the train, $\tau$ is the inter-sample distance, i.e., the sampling period, and $t$ is the starting time of the train.
Such an embedding was proposed by Packard et al.\ \cite{Packard80} and by Takens~\cite{Takens81} in order
to study the dynamics of nonlinear systems. It has also been used in
topological data analysis~proposed by Carlsson~\cite{Carlsson09} to identify
qualitative properties of the data-sets. We denote the set of all points
\eqref{e:seqdef} by \ccurve, i.e.,
\begin{equation}
    \ccurve = \bigl\{ \samp(t) \colon  t\in\R \bigr\} \subset \R^d.
\end{equation}
If signal $s$ is continuous and periodic, then \ccurve is a closed curve.
We show that if a $T$-periodic signal $s$ satisfies some
regularity conditions, then curve $\ccurve\subset\R^d$ defines
a~homeomorphism \Rt of curve $\ccurve[d-1,\tau]\subset\R^{d-1}$ which is formed by sample trains of
length~$d-1$. Poincaré \cite{poincare81} showed that each homeomorphism of
a~closed curve has a~topological invariant called the rotation number. In this
paper, we prove that the rotation number of the homeomorphism \Rt is strictly related
to the ratio~$\tau/T$. This relationship can be used to find
signal period~$T$ if curve~\ccurve and sampling period $\tau$ are known. 

Works of Rader \cite{rader77} and Choi et al.\ \cite{choi11} showed that
$T$-periodic signals can be reconstructed from infinite trains of their samples
taken with any sampling period $\tau<\frac{T}{2}$ such that $\tau/T$ is irrational. 
In this paper, we prove that a~similar reconstruction is possible from the
infinite set of finite-length sample trains, i.e., we may reconstruct
a~periodic signal from its curve~\ccurve. In~\cite{Rup20icassp} and
\cite{Rup20spl} the author demonstrated that periodic signals can be reconstructed
from a~probabilistic distribution defined on their curves~\ccurve. The results
presented in this paper imply that such a~distribution is not needed if the
sampling period and the signal period are incommensurable.

The next two sections introduce the concepts of a periodic map and a~rotation
number, respectively. Section~\ref{s:perrec} shows how to reconstruct the
period of the signal from the curve of sample trains, and the following section
deals with the reconstruction of signals up to a~time shift. The paper is
concluded in Section~\ref{s:conclusion}.

\section{Periodic covering maps}\label{s:covmap}
In this and the following sections we make use of some concepts of algebraic topology and
dynamical systems, e.g.,
a~homeomorphism, a~covering map, a~lift and a~rotation number. Because of the
scope and the form of this paper, we reduce the presentation of the required mathematical concepts to a~bare minimum. A more comprehensive treatment of these concepts can be found
in, e.g., \cite{Spanier1989, demelo2012}. 

\begin{df}
    A $T$-periodic mapping $\sv\colon\R\to\R^d$ is called a~covering map if  
    $\sv$ restricted to any interval of length smaller than $T$ is a~homeomorphism, i.e.,
    if each such restriction is continuous and has continuous inverse function.
\end{df}
If a $T$-periodic mapping $\sv\colon\R\to\R^d$ is a covering map,
then the image of $\sv$ does not have self-intersections and thus this image is
a~closed curve which is homeomorphic to a~circle. The covering map carries the natural
orientation of real line (from smaller to bigger numbers) to one of the two
possible orientations of this closed curve, see~Fig.~\ref{f:rys02}. 

\begin{figure}[tpb]
    \centerline{\includegraphics{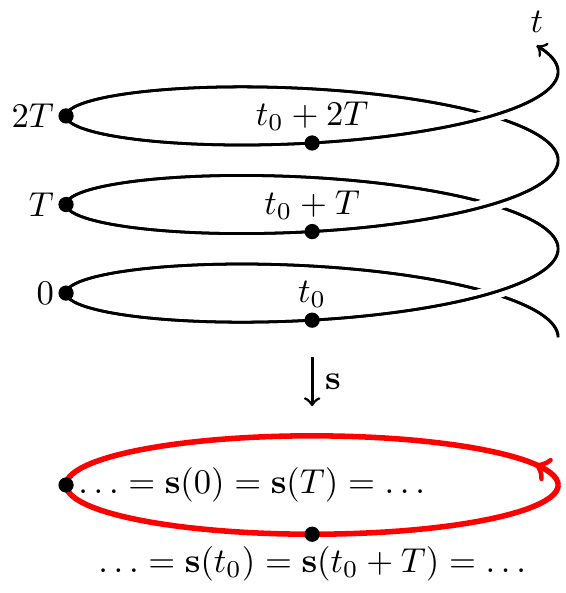}}
    \caption{The orientation of a closed curve induced by its $T$-periodic covering map.}
    \label{f:rys02}
\end{figure}

The following example shows that whether
or not a mapping \samp is a~covering map may depend on the sample train length~$d$.
\begin{example}\label{ex:1}
    Let $s\colon\R\to\R$ be a $1$-periodic signal defined as
    \begin{equation}\label{e:ex1}
        s(t) = (\sin\pi t)^2 \sin\left(2\pi t \left(1+t\right)\right),\qquad \text{for }t\in[0,1].
    \end{equation}
    Figure~\ref{f:rys01a} shows the graph of signal~$s$.
    Mapping \samp[2,0.2] is not a~covering map because curve \ccurve[2,0.2] has intersections as shown in
    Fig.~\ref{f:rys01b}.
    However,
    \samp[d,0.2] becomes a~covering map for $d=3$ (see Fig.~\ref{f:rys01c}), and it stays
    a~covering map for all $d>3$.
\end{example}
\begin{figure}[tpb]
    \centerline{\includegraphics{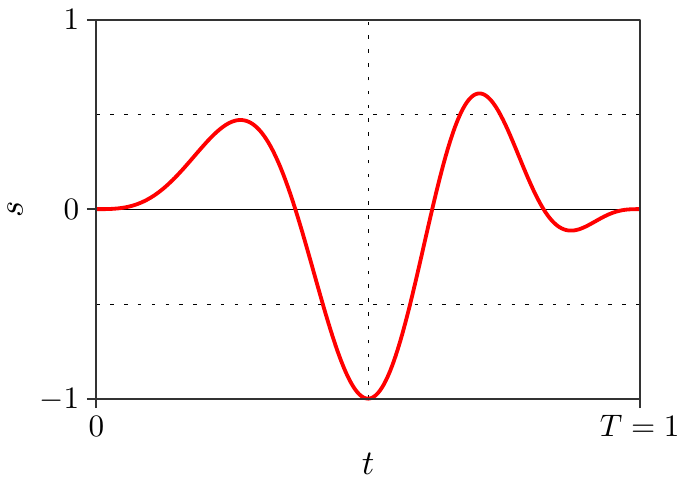}}
    \caption{A $1$-periodic signal $s$ considered in Example~\ref{ex:1}.}
    \label{f:rys01a}
\end{figure}
The last statement of the above example results from the following lemma, which
is a direct consequence of~\eqref{e:seqdef} and the definition of a~periodic
covering map.
\begin{lemma}\label{l:scovind}
If $s$ is a signal for which \samp is a $T$-periodic covering map, then
\samp[d+1,\tau] is also a~$T$-periodic covering map.
\end{lemma}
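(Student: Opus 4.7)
The plan is to reduce the $(d{+}1)$-sample statement to the $d$-sample hypothesis by observing that $\samp[d+1,\tau]$ is built from $\samp[d,\tau]$ together with one extra continuous coordinate, and that forgetting this extra coordinate projects the two curves onto each other. I will verify the two requirements of the definition separately: $T$-periodicity, and that every restriction to an interval of length less than $T$ is a homeomorphism onto its image.

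For $T$-periodicity, I first note that the hypothesis forces $s$ itself to be $T$-periodic: looking at the first coordinate of $\samp[d,\tau](t+T)=\samp[d,\tau](t)$ gives $s(t+T)=s(t)$ for all $t\in\R$. Consequently every coordinate of $\samp[d+1,\tau](t)=[s(t),s(t+\tau),\ldots,s(t+d\tau)]$ is $T$-periodic, and so is $\samp[d+1,\tau]$. This step is routine.

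For the homeomorphism requirement, fix an interval $I\subset\R$ with $|I|<T$. Continuity of $\samp[d+1,\tau]$ on $I$ is inherited from continuity of $s$, which in turn follows from the continuity of $\samp[d,\tau]$. To get injectivity and a continuous inverse simultaneously, let $\pi\colon\R^{d+1}\to\R^d$ denote the projection onto the first $d$ coordinates. Then by construction $\pi\circ\samp[d+1,\tau]=\samp[d,\tau]$, so if two points $t_1,t_2\in I$ satisfy $\samp[d+1,\tau](t_1)=\samp[d+1,\tau](t_2)$, applying $\pi$ gives $\samp[d,\tau](t_1)=\samp[d,\tau](t_2)$, and the hypothesis that $\samp[d,\tau]|_I$ is a homeomorphism yields $t_1=t_2$. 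The same identity shows that the inverse of $\samp[d+1,\tau]|_I$ on its image equals $(\samp[d,\tau]|_I)^{-1}\circ\pi$ restricted to $\samp[d+1,\tau](I)$, which is continuous as a composition of the continuous maps $\pi$ and $(\samp[d,\tau]|_I)^{-1}$.

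Since the proof is essentially a projection argument, I do not expect a genuine obstacle; the only thing to be careful about is extracting $T$-periodicity of $s$ from $T$-periodicity of $\samp[d,\tau]$ before quoting periodicity of the added coordinate $t\mapsto s(t+d\tau)$, and making sure the inverse is defined on $\samp[d+1,\tau](I)$ rather than on all of $\R^{d+1}$.
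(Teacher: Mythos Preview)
Your argument is correct and is exactly the kind of projection-and-definition unpacking the paper has in mind: the paper does not spell out a proof but simply remarks that the lemma ``is a direct consequence of~\eqref{e:seqdef} and the definition of a~periodic covering map,'' which is precisely your use of $\pi\circ\samp[d+1,\tau]=\samp[d,\tau]$ together with the homeomorphism property of $\samp[d,\tau]$ on short intervals. Your extra care about deducing $T$-periodicity of $s$ first and restricting the inverse to $\samp[d+1,\tau](I)$ is appropriate and does not deviate from the intended approach.
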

\begin{figure*}[th]
    \centering%
    \subfloat[$d=2$]{\includegraphics{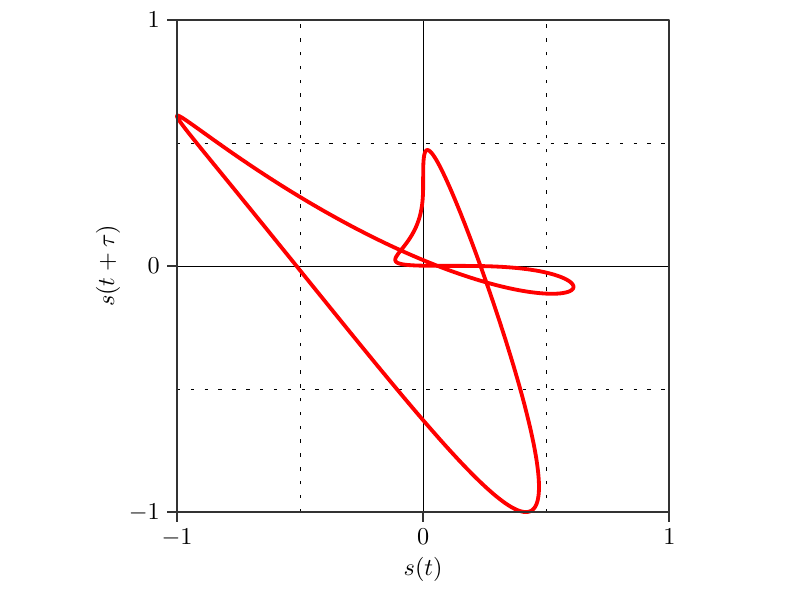}\label{f:rys01b}}%
    \subfloat[$d=3$]{\includegraphics[width=11cm]{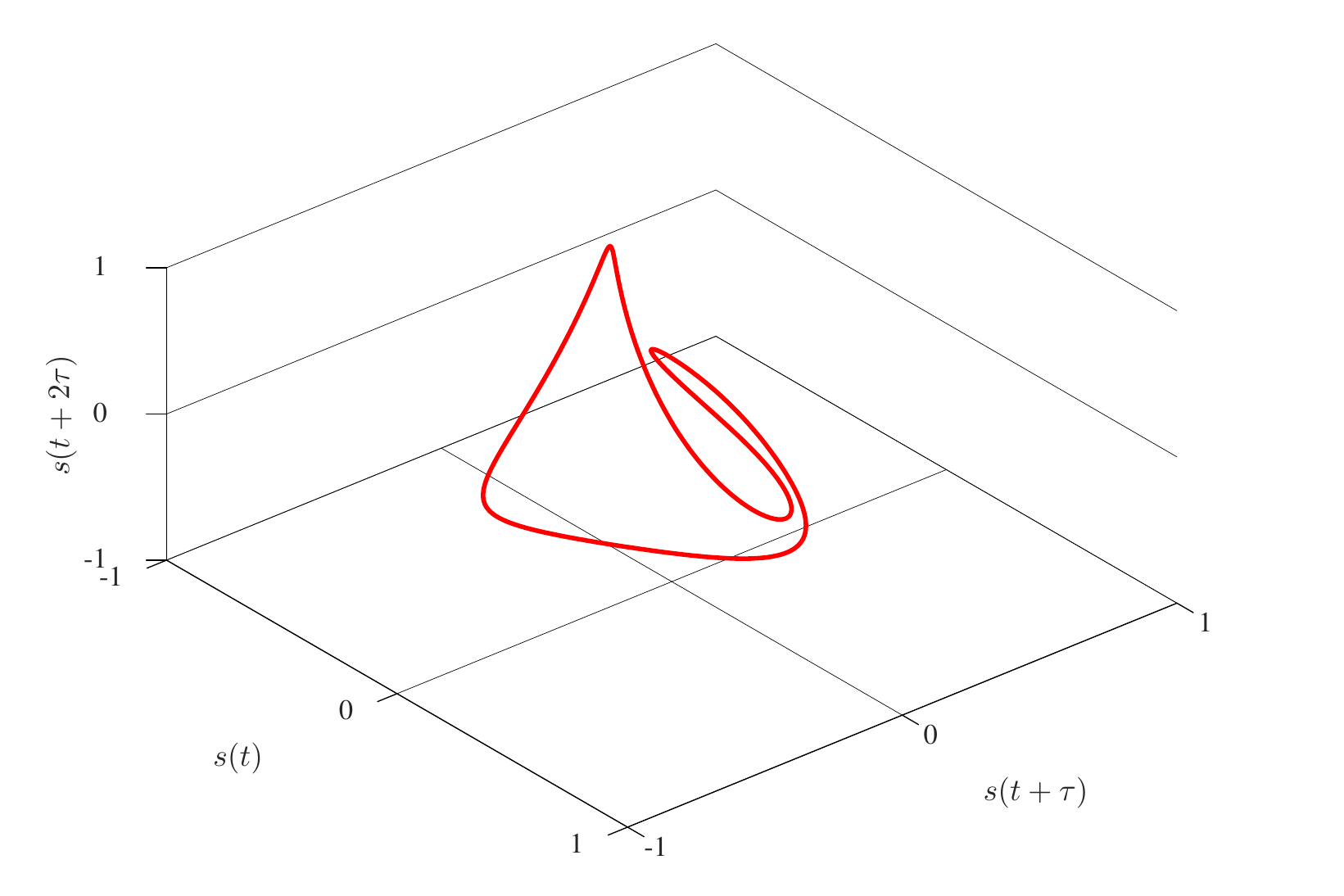}\label{f:rys01c}}%
    \caption{Curves \ccurve[d,0.2] for signal $s$ considered in Example~\ref{ex:1}.}
    \label{f:rys01bc}
\end{figure*}

%If $\samp$ is a covering map then \samp[d',\tau] is also a covering map for any $d'>d$.

\section{Rotation number}\label{s:rotnum}
Let a~curve \ccurve[] be the image of a $T$-periodic covering map $\sv\colon\R\to\R^d$
and let $\Rt[]\colon\ccurve[]\to\ccurve[]$ be a~homeomorphism.
\begin{df}
    A mapping $F\colon\R\to\R$ is called a~lift of \Rt[] across~\sv if
    \begin{equation}
        \Rt[] \circ \sv = \sv \circ F,
    \end{equation}
    where symbol $\circ$ denotes function composition.
\end{df}
If \Rt[] preserves the orientation of \ccurve[] induced by \sv, then the lift of \Rt[] across~\sv is an increasing function and 
\begin{equation}
    F(x+kT) = F(x) + kT,
\end{equation}
for every real $x$ and integer $k$.
Similarly, if \Rt[] reverses the orientation of \ccurve[], then $F$ is decreasing and $F(x+kT)=F(x)-kT$.

If \samp is a $T$-periodic covering map, then there exists a~natural homeomorphism $\Rt$ of curve \ccurve:
\begin{equation}\label{e:Rtdef0}
    \Rt\colon\ccurve\to\ccurve\qquad \Rt\left(\samp\left(t\right)\right) = \samp\left(t+\tau\right).
\end{equation}
Translation of real line by $\tau$:
\begin{equation}
    \Tr\colon\R\to\R,\qquad \Tr(x)=x+\tau
\end{equation}
is a~lift of \Rt across \samp, i.e.,
\begin{equation}\label{e:Rtdef}
    \Rt \circ \samp = \samp \circ \Tr.
\end{equation}

\begin{df}
    Rotation number of an orientation-preserving homeomorphism \Rt[] of an oriented~closed
    curve is defined as the limit 
    \begin{equation}\label{e:rhodef}
        \rho(\Rt[]) = \lim_{n\to\infty} \frac{F^n(x)-x}{nT} \mod 1,
    \end{equation}
    where $x\in\R$ is an arbitrary point, $F\colon\R\to\R$ is a lift of~\Rt[]
    across a~$T$-periodic covering map~$\sv$ which induces the orientation of the curve,
    and where $F^n$ stands for $n$-fold composition of $F$ with itself.
\end{df}
Poincaré \cite{poincare81} showed that 
the choice of point $x$, lift $F$ and covering map~\sv does not affect the value
of~\eqref{e:rhodef} (see also~\cite{demelo2012} for a~comprehensive treatment
of the subject). He also showed that the rotation number is invariant under
topological conjugacy, i.e., if \Rt[] and \Qc are two homeomorphisms of the same
closed curve, then the rotation number of \Rt[] is the same as the rotation
number of $\Qc^{-1} \circ \Rt[] \circ \Qc$.
% it is also invariant under semi conjugacy !
Note that the rotation number depends on the orientation of the curve, i.e., for a~given
homeomorphism \Rt[] the rotation number changes from $\rho$ to $1-\rho$ if
the considered orientation of the curve is reversed.

Substitution $\sv=\samp$, $F=\Tr$ and $\Rt[]=\Rt$ into~\eqref{e:rhodef} yields
\begin{equation}
    \rho(\Rt) = \frac{\tau}{T} \mod 1,
\end{equation}
provided that the chosen orientation of curve \ccurve happens to be the same as the orientation induced by~\samp. In the opposite case, we get
\begin{equation}
    1-\rho(\Rt) = \frac{\tau}{T} \mod 1.
\end{equation}

We conclude the above consideration with the following lemma. 
\begin{lemma}\label{l:perdet}
    If $s$ is a~$T$-periodic signal such that \samp is a~covering map and $\tau<\frac{T}{2}$,
    then 
    \begin{equation}
        \frac{\tau}{T} = \min\bigl(\rho(\Rt), 1-\rho(\Rt)\bigr),
    \end{equation}
    where \Rt and $\rho(\Rt)$ are defined
    by~\eqref{e:Rtdef} and~\eqref{e:rhodef}, respectively.
\end{lemma}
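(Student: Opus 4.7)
The plan is to exploit the two identities derived just before the lemma, which already pin down $\rho(\Rt)$ up to an orientation ambiguity, and then use the hypothesis $\tau<T/2$ to convert the resulting pair of candidates into a single value.

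First I would recall that, by the computation sketched above, the translation $\Tr\colon x\mapsto x+\tau$ is a lift of $\Rt$ across $\samp$ (equation~\eqref{e:Rtdef}). Plugging $F=\Tr$ into the defining limit~\eqref{e:rhodef} gives, for any starting point $x$ and every $n$,
\begin{equation}
    \frac{F^n(x)-x}{nT} = \frac{n\tau}{nT} = \frac{\tau}{T},
\end{equation}
so the limit exists trivially and equals $\tau/T$. Depending on whether the chosen orientation of $\ccurve$ agrees with the one induced by $\samp$ or is opposite to it, this yields either $\rho(\Rt) = \tau/T \mod 1$ or $1-\rho(\Rt) = \tau/T \mod 1$.

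Next I would use the hypothesis $\tau<T/2$, which gives $0<\tau/T<1/2$, so that the reduction modulo~$1$ is vacuous and $\tau/T\mod 1 = \tau/T$. Consequently the unordered pair $\{\rho(\Rt),\,1-\rho(\Rt)\}$ is exactly $\{\tau/T,\,1-\tau/T\}$, and since $\tau/T<1/2<1-\tau/T$, the smaller element is $\tau/T$. This establishes $\min\bigl(\rho(\Rt),\,1-\rho(\Rt)\bigr)=\tau/T$, as claimed.

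The main (and essentially only) subtlety is the orientation bookkeeping: one must invoke the paper's earlier observation that the rotation number depends on the orientation, with the two admissible values related by $\rho\leftrightarrow 1-\rho$. The bound $\tau<T/2$ is precisely what breaks the symmetry between these two candidates and thus allows the period~$T$ to be recovered from the curve alone without having to first fix an orientation.
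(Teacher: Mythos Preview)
Your argument is correct and matches the paper's approach: the paper presents this lemma as a direct conclusion of the preceding computation (substituting $F=\Tr$ into~\eqref{e:rhodef} and noting the orientation ambiguity), and you have simply made the final step explicit, namely that $\tau/T<1/2$ removes the $\bmod 1$ and selects the smaller of the two candidates.
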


\section{Period reconstruction}\label{s:perrec}
Let $s$ be a periodic signal. If its curve \ccurve[d+1,\tau] is given, then we
can recover curve \ccurve by projecting \ccurve[d+1,\tau] onto the first $d$ or the last
$d$ coordinates, i.e.,
\begin{equation}
    \ccurve = \{\lpr{\pv}\mid \pv\in\ccurve[d+1,\tau]\}
            = \{\rpr{\pv}\mid \pv\in\ccurve[d+1,\tau]\},
\end{equation}
where for any point $\pv=[x_1,\dots,x_{d+1}]\in\R^{d+1}$, the projections are defined as
\begin{equation}
    \lpr{\pv} = [x_1,\dots,x_d]\quad\text{and}\quad
    \rpr{\pv} = [x_2,\dots,x_{d+1}].
\end{equation}
If \samp happens to be a~covering map, then homeomorphism
$\Rt\colon\ccurve\to\ccurve$ is determined by curve \ccurve[d+1,\tau]
(with some abuse of notation we denote by the same symbol \Rt all the homeomorphisms defined by~\eqref{e:Rtdef0} independently of the value of~$d$).
The above fact is implied by Lemma~\ref{l:scovind} and the observation stated below as
Lemma~\ref{l:Rt}.
\begin{lemma}\label{l:Rt}
    If $s$ is a periodic signal such that \samp is a~covering map, then
    the homeomorphism $\Rt\colon\ccurve\to\ccurve$ defined by~\eqref{e:Rtdef0}, 
    is uniquely determined by \ccurve[d+1,\tau]:
    \begin{equation}\label{e:Rtdefgeom}
        \Rt (\lpr{\pv}) = \rpr{\pv},\quad\text{for each}\quad \pv \in\ccurve[d+1,\tau].
    \end{equation}
\end{lemma}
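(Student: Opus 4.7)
The plan is to unwind~\eqref{e:seqdef} together with~\eqref{e:Rtdef0} and then settle the one issue of well-definedness. For any $\pv\in\ccurve[d+1,\tau]$ I would fix a parameter $t\in\R$ with $\pv=\samp[d+1,\tau](t)=[s(t),s(t+\tau),\dots,s(t+d\tau)]$. Reading off the first $d$ coordinates yields $\lpr{\pv}=\samp(t)$, and reading off the last $d$ yields $\rpr{\pv}=\samp(t+\tau)$, so~\eqref{e:Rtdef0} immediately gives
\begin{equation*}
    \Rt(\lpr{\pv})=\Rt(\samp(t))=\samp(t+\tau)=\rpr{\pv},
\end{equation*}
which is the displayed identity~\eqref{e:Rtdefgeom}.

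To deduce that $\Rt$ is \emph{uniquely} specified by knowledge of $\ccurve[d+1,\tau]$, I still have to verify that the right-hand side of~\eqref{e:Rtdefgeom} depends only on $\lpr{\pv}$. The cleanest way is to show that the projection $\lpr{\cdot}\colon\ccurve[d+1,\tau]\to\ccurve$ is a bijection: surjectivity is immediate from the definition of \ccurve, and for injectivity I would argue that if $\pv_i=\samp[d+1,\tau](t_i)$ with $\lpr{\pv_1}=\lpr{\pv_2}$, then $\samp(t_1)=\samp(t_2)$. Since $\samp$ is a $T$-periodic covering map, its restriction to any interval of length smaller than $T$ is injective, which combined with $T$-periodicity forces $t_1-t_2\in T\Z$. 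The $T$-periodicity of $s$ then promotes this to $s(t_1+d\tau)=s(t_2+d\tau)$, hence $\pv_1=\pv_2$.

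I do not foresee a serious obstacle. The single step that calls for a little care is the injectivity argument, which genuinely uses both ingredients of the covering-map hypothesis: the local-homeomorphism property, to rule out spurious coincidences $\samp(t_1)=\samp(t_2)$ within a single period window, and $T$-periodicity, both to reduce to such a window in the first place and to extend the coincidence of the first $d$ coordinates to the $(d+1)$-st.
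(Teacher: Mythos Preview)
Your argument is correct and is exactly the intended one: the paper does not spell out a proof of this lemma at all, presenting it as an ``observation'' obtained by unwinding \eqref{e:seqdef} and \eqref{e:Rtdef0}, with well-definedness following from the covering-map hypothesis on \samp. Your handling of the injectivity of $\lpr{\cdot}$ is the natural way to make that step explicit.
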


Lemmas~\ref{l:perdet} and~\ref{l:Rt} imply the following theorem, which provides
conditions under which curve \ccurve[d+1,\tau] determines the period of
a~periodic signal~$s$.
\begin{theorem}\label{t:T}
    Let $s$ be a $T$-periodic signal such that \samp is a~covering map.
    If $\tau<\frac{T}{2}$, then curve \ccurve[d+1,\tau] determines period $T$ as
    \begin{equation}
        T = \frac{\tau}{\min\bigl(\rho\left(\Rt\right), 1-\rho\left(\Rt\right)\bigr)},
    \end{equation}
    where homeomorphism \Rt is defined by~\eqref{e:Rtdefgeom}.
\end{theorem}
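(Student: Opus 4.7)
The plan is to assemble the theorem by chaining together the three preceding lemmas; no new machinery is needed. First I would note that since $\samp$ is a $T$-periodic covering map by hypothesis, Lemma~\ref{l:scovind} guarantees that $\samp[d+1,\tau]$ is also a $T$-periodic covering map. Consequently $\ccurve[d+1,\tau]$ is a simple closed curve in $\R^{d+1}$, and the coordinate projections $\lpr{\pv}$ and $\rpr{\pv}$ send points of $\ccurve[d+1,\tau]$ into well-defined points of $\ccurve$. As the excerpt already observes, applying either projection to all of $\ccurve[d+1,\tau]$ recovers $\ccurve$, so $\ccurve$ itself is extracted from $\ccurve[d+1,\tau]$.

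Next I would reconstruct the dynamical object $\Rt$ from the geometric datum $\ccurve[d+1,\tau]$. By Lemma~\ref{l:Rt}, the time-shift homeomorphism defined in~\eqref{e:Rtdef0} coincides with the purely geometric prescription $\lpr{\pv}\mapsto\rpr{\pv}$ for $\pv\in\ccurve[d+1,\tau]$. Hence not only the curve $\ccurve$ but also the homeomorphism $\Rt\colon\ccurve\to\ccurve$ is uniquely determined by $\ccurve[d+1,\tau]$, and therefore so is its rotation number $\rho(\Rt)$, which by Poincaré's theorem is a topological invariant of the pair $(\ccurve,\Rt)$ independent of any choice of lift or orientation-inducing covering map.

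Finally, I would invoke Lemma~\ref{l:perdet}: the hypothesis $\tau<\frac{T}{2}$ places $\tau/T$ in the open interval $(0,\tfrac12)$, so among the two candidates $\rho(\Rt)$ and $1-\rho(\Rt)$ — which differ by whether the chosen orientation of $\ccurve$ matches or opposes the one induced by $\samp$ — the smaller one must equal $\tau/T$. Solving for $T$ yields the stated formula.

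There is no real obstacle here; the statement is essentially a bookkeeping corollary. The only delicate point is the identification of the geometric map $\lpr{\pv}\mapsto\rpr{\pv}$ with the dynamical shift $\samp(t)\mapsto\samp(t+\tau)$, but that identification is precisely the content of Lemma~\ref{l:Rt}, and the ambiguity coming from orientation is precisely what the $\min$ in the statement absorbs.
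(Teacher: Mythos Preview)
Your argument is correct and follows exactly the route the paper takes: the text preceding the theorem states that Lemma~\ref{l:scovind} together with Lemma~\ref{l:Rt} shows $\Rt$ is determined by $\ccurve[d+1,\tau]$, and then declares that Lemmas~\ref{l:perdet} and~\ref{l:Rt} imply the theorem. Your write-up simply unpacks this chain with a bit more detail, including the explicit handling of the orientation ambiguity via the $\min$.
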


Notice that without condition $\tau<\frac{T}{2}$ in the assertion of the above theorem,
all that could be deduced from \ccurve is that
\begin{equation}
    T = \frac{\tau}{n + \rho\left(\Rt\right)}\quad\text{or}\quad
    T = \frac{\tau}{n + 1 - \rho\left(\Rt\right)},
\end{equation}
for some non-negative integer~$n$. In other words, we may replace condition $\tau<\frac{T}{2}$ in Theorem~\ref{t:T} with the requirement that 
\begin{equation}\label{e:newcond}
    \tau \in \left( n \frac{T}{2},\, (n+1)\frac{T}{2}\right),
\end{equation}
for any non-negative integer~$n$, provided that this integer is known.

\section{Signal reconstruction}\label{s:sigrec}
In the previous section, we showed that the period of a~periodic signal may be
recovered from its curve \ccurve[d+1,\tau] formed by all sample trains of length $d+1$ which are 
taken with a~fixed sampling period~$\tau$. An interesting question is whether it is
also possible to reconstruct the signal $s$ from that curve. The answer to this
question depends on the rotation number of the homeomorphism~\Rt.
\begin{theorem}\label{t:s}
    Let $s$ be a $T$-periodic signal such that \samp is a~covering map for some
    $\tau<\frac{T}{2}$, and let $\Rt\colon\ccurve\to\ccurve$ be the homeomorphism defined
    by~\eqref{e:Rtdefgeom}. If rotation number $\rho(\Rt)$ is irrational, then
    curve \ccurve[d+1,\tau] determines signal $s$ up to a~time shift.
\end{theorem}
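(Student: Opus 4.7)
I would argue that any two $T$-periodic signals $s_1, s_2$ (each satisfying the covering-map hypothesis) whose curves $\ccurve[d+1,\tau]$ coincide must agree up to a time shift. From the shared curve $\ccurve[d+1,\tau]$, projecting onto the first (or last) $d$ coordinates produces a common curve~$\ccurve$, and Lemma~\ref{l:Rt} shows that the homeomorphism $\Rt$ given by~\eqref{e:Rtdefgeom} is the same for both signals. Theorem~\ref{t:T} then supplies the common period~$T$. Because Lemma~\ref{l:perdet} forces $\tau/T\bmod 1$ to equal either $\rho(\Rt)$ or $1-\rho(\Rt)$, the hypothesis that $\rho(\Rt)$ is irrational makes $\tau/T$ itself irrational.

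Next I would transport the parameterization around~$\ccurve$ via $\Rt$. Pick any point $\pv_0\in\ccurve$ and, for $i\in\{1,2\}$, choose $t_i\in\R$ with $\sv^{(i)}_{d,\tau}(t_i)=\pv_0$; set $\delta=t_2-t_1$. Iterating the intertwining relation~\eqref{e:Rtdef} gives
\begin{equation}
    \sv^{(1)}_{d,\tau}(t_1+n\tau)=\Rt^n(\pv_0)=\sv^{(2)}_{d,\tau}(t_2+n\tau)\qquad\text{for all } n\in\Z,
\end{equation}
which rearranges to $\sv^{(1)}_{d,\tau}(t)=\sv^{(2)}_{d,\tau}(t+\delta)$ on the arithmetic progression $A=\{t_1+n\tau:n\in\Z\}$.

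Finally, the $T$-periodicity of both maps reduces the identity to one on $A\bmod T\subset\R/T\Z$. Irrationality of $\tau/T$, together with Kronecker's density theorem, makes this residue set dense in $\R/T\Z$; continuity of the two covering maps then promotes the dense equality to equality for all $t$. Reading off the first coordinate yields $s_1(t)=s_2(t+\delta)$, which is the required time-shift relation. The step I expect to require the most care is the passage from ``$\rho(\Rt)$ irrational'' to ``$\tau/T$ irrational,'' which hinges on Lemma~\ref{l:perdet} and the two sign cases coming from the orientation induced by the covering map; beyond that, density of the $\tau$-orbit combined with a continuity extension is routine.
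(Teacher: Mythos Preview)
Your proof is correct and follows essentially the same route as the paper's: both use Theorem~\ref{t:T} to recover~$T$, invoke the irrationality of $\tau/T$ to make the $\Rt$-orbit of a chosen point dense in $\ccurve$ (equivalently, $\{n\tau \bmod T\}$ dense in $[0,T)$), and then pass from equality on a dense set to equality everywhere by continuity. The only cosmetic difference is that the paper frames the argument as an explicit construction---defining $s_1(t)=\pi_1\bigl(\lim_n \Rt^{k_n}(\qv)\bigr)$ for a suitable sequence $k_n$---whereas you frame it as a uniqueness statement comparing two candidate signals; the underlying mechanics are identical.
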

\begin{proof}
    By Theorem~\ref{t:T} we can assume that period~$T$ is known. Let \qv be
    a~point lying in~\ccurve. This point equals $\samp(t_0)$ for some
    $t_0\in[0,T)$. We set
    \begin{equation}
        s_1(0) = \pi_1(\qv) = s(t_0),
    \end{equation}
    where $\pi_1$ denotes the projection to the first coordinate, i.e.,
    $\pi_1([x_1,\dots,x_d])=x_1$.
    Without loss of generality, we may assume that the orientation of \ccurve is so that~$\rho(\Rt)=\tau/T$.
    Since the rotation number $\rho(\Rt)$ is irrational, multiples
    of~$\tau$ form a~dense subset of interval $[0,T)$ when considered modulo~$T$~\cite{demelo2012}.
    Therefore, for any $t\in\R$ there exists a sequence $k_1,\,k_2,\,\dots$ of
    positive integers such that
    \begin{equation}
        \lim_{n\to\infty} {k_n}\tau = t \mod T.
    \end{equation}
    We set
    \begin{equation}
        s_1(t) = \pi_1\left(\lim_{n\to\infty} \Rt^{k_n}\left(\qv\right)\right) =
        s(t_0 + t).
    \end{equation}
    Thus, we obtain a~signal $s_1\colon\R\to\R$, which differs from $s$ by a~time
    shift only.
\end{proof} 
If the rotation number of homeomorphism~\Rt is rational, then \ccurve is not
enough to recover signal $s$ up to a~time shift, what is shown in the following
example.
\begin{example}\label{ex:2}
    Let $\tau=p/q$ for some positive integers~$p<q$. Signal $s(t)=\sin 2\pi t$
    is a~$1$-periodic signal and \samp is a~covering map for each $d\geq2$
    (\ccurve[2,\tau] is an~ellipse). Let $r$ be a homeomorphism of~\R defined as
    \begin{equation}
        r(t) = t + \frac{1}{4\pi q}\sin\left(2\pi q t\right).
    \end{equation}
    Signal $s'$ defined as $s'=s\circ r$ is another $1$-periodic signal and $\samp'=\samp \circ r$
    is a covering map for each $d\geq2$. Signals $s$ and $s'$ do
    not differ by a~time shift only (see Fig.~\ref{f:rys03}). However, curves \ccurve defined for
    signal~$s$ are the same as the corresponding curves $\ccurve'$ constructed for $s'$ because
    the image of $\samp'$ is the same as the image of $\samp$ for each
    $d\geq2$.
\begin{figure}[t]
    \centerline{\includegraphics{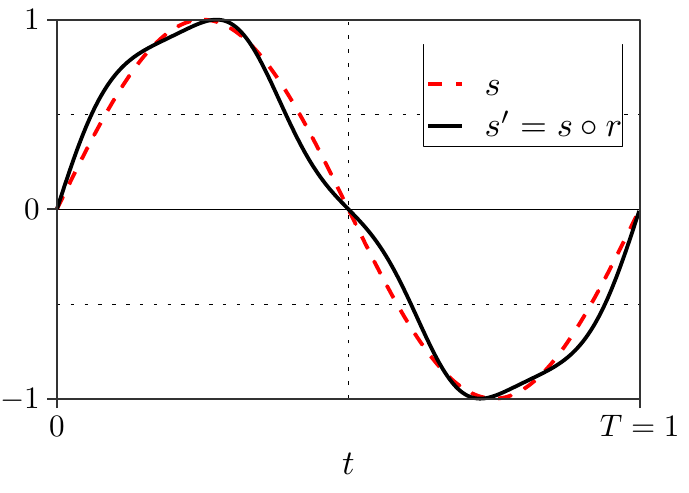}}
    \caption{Signals $s$ and $s'$ considered in Example~\ref{ex:2} for $q=3$.}
    \label{f:rys03}
\end{figure}
\end{example}
It is worth to note that, although \ccurve is not enough to recover the
underlying signal~$s$, it is possible to reconstruct signal $s$ from the
probability distribution on \ccurve which results from uniform
probability distribution of the starting times of the
sample trains~\cite{Rup20icassp,Rup20spl}.

\section{Conclusion}\label{s:conclusion}
We showed that a curve \ccurve formed by sample trains of length~$d\geq3$ carries
valuable information on the sampled signal~$s$. If mapping \samp[d-1,\tau] is a~covering
map, then the period $T$ of the signal~$s$ can be recovered from curve~\ccurve,
provided that $2\tau/T<1$ (Theorem~\ref{t:T}). The same holds if
$2\tau/T\in(k,k+1)$ for any positive integer~$k$, provided that $k$ is known prior to the recovery of~$T$.
Moreover, if ratio $\tau/T$ is irrational then \ccurve determines signal~$s$ up to
a time shift (Theorem~\ref{t:s}). By providing a counterexample we showed that
the same statement does not hold when $\tau/T$ is rational.

The results presented in this paper establish a~link between sampling theory,
 the theory of dynamical systems, and algebraic topology. We believe that the
results will also form a basis for further research
on construction of consistent period estimators from finite number of trains of samples.

%\section*{Acknowledgment}

% trigger a \newpage just before the given reference
% number - used to balance the columns on the last page
% adjust value as needed - may need to be readjusted if
% the document is modified later
%\IEEEtriggeratref{8}
% The "triggered" command can be changed if desired:
%\IEEEtriggercmd{\enlargethispage{-5in}}

% references section

% can use a bibliography generated by BibTeX as a .bbl file
% BibTeX documentation can be easily obtained at:
% http://mirror.ctan.org/biblio/bibtex/contrib/doc/
% The IEEEtran BibTeX style support page is at:
% http://www.michaelshell.org/tex/ieeetran/bibtex/
\bibliographystyle{IEEEbib}
% argument is your BibTeX string definitions and bibliography database(s)
\bibliography{IEEEabrv,biblio}
\end{document}